\theoremstyle{definition}
\newtheorem{defn}{\protect\definitionname}
\theoremstyle{plain}
\newtheorem{prop}{\protect\propositionname}
\theoremstyle{plain}
\newtheorem{lemma}{\protect\lemmaname}
\theoremstyle{plain}
\newtheorem{corollary}{\protect\corollaryname}
\providecommand{\definitionname}{Definition}
\providecommand{\propositionname}{Proposition}
\providecommand{\lemmaname}{Lemma}
\providecommand{\corollaryname}{Corollary}
\begin{document}
	\title{Coordinated Shirking in Technology Adoption}
	\author{Nicholas H. Tenev\thanks{Office of the Comptroller of the Currency. The views expressed in this paper are my own and do not represent the views of the OCC, the Department of the Treasury, or the United States government. This paper is the result of my independent research and has not been reviewed by the OCC. I am grateful to Enghin Atalay, Shane Auerbach, Alexander Clark, Dean Corbae, Steven Durlauf, Chao Fu, and Rasmus Lentz for helpful comments and suggestions. An earlier draft of this paper was circulated under the title ``Coordinated Shirking.''}}
	\maketitle
	\begin{abstract}
		This paper studies a model of technology adoption: a manager tries to induce a group of workers to exert costly effort to vet a new technology before they choose whether to use it. The manager finds it too costly to simultaneously replace large groups of unproductive workers, so they shirk when coordination is possible. Widely applicable technology expands productive possibilities but also provides an opportunity for coordinated shirking, and can thus lead to widespread production failure. Furthermore, even workers who learn that they are using flawed technology may continue to do so. Applications include mortgage securitization in the financial crisis of 2008, and the adoption of generative artificial intelligence.
	\end{abstract}
	
	Keywords: technology adoption, principal-agent, artificial intelligence, financial crisis, Gaussian copula, mortgage securitization \bigskip \\ 
	
	JEL codes: G01, O33, J33, M52
	\section{Introduction}
	New technologies are typically expected to increase productivity and output. However, if not properly vetted they may be misused, with potentially ruinous consequences. For example, a formula used to price mortgage-backed securities could understate the risk of widespread defaults, as happened in the lead-up to the financial crisis of 2008. Or to give an example using an emerging technology, generative artificial intelligence (AI) can produce text that seems correct but actually contains serious factual errors (``hallucinations''). This paper shows that the very popularity of a technology can create an agency friction, leading to coordinated shirking amongst those who are supposed to be vetting it. Specifically, workers evaluating new technology may shirk their duty when it is sufficiently widespread in use, since their employers may not credibly be able to threaten to fire all of them. Counterintuitively, new technology can thus result in a drop in output even as it expands production possibilities. This paper models this agency friction, applies it to real-world examples, and discusses policy implications for mitigation.
	
	In the model, a lone principal (manager) seeks to induce effort from a group of agents (workers). A new technology has arrived, and is applicable to a certain fraction of the agents. If the technology is ``good'' it increases production for those who use it, but in rare cases it is ``bad'' and reduces production to zero. Agents with access to the new technology can exert costly research effort to obtain a noisy signal of whether it is good. After obtaining this signal (or shirking and seeing no signal), these agents decide whether or not to use the new technology. The principal is unable to observe the agents' effort or signals, but she can observe whether they use the technology and, eventually, their production. She is able to punish agents who fail to produce by replacing them, but some replacements cost more than others, so the total cost is convex in the number of agents punished. The main idea is that the principal may be unable to credibly threaten punishment to large groups of agents who shirk simultaneously: a ``too many to fail'' problem. The idea that mass punishment may be infeasible has been exploited in various contexts throughout history, such as ``round robin'' complaints that arranged petitioners' names in a circle to reduce the risk of individual reprisal, used by 18$^{\text{th}}$ century European sailors \citep{leeson2010rational} and Edo period farmers in Japan \citep{masamichi2021popular}, among others. It also appears in the work of \citet{acharya_cash---market_2008,acharya_too_2007} and \citet{farhi_collective_2012}, in which the promise of government bailouts in an aggregate downturn can prompt banks to correlate the returns of their investments. This paper's novelty is in the application of this mechanism to technology adoption, and focuses on the friction introduced by the interaction of technological progress with the convexity in the principal's cost of punishment. The focus here is on workers' technology adoption choices, which can interact with firm-level dynamics: hasty adoption by workers may exacerbate firms' rushing to establish first-mover advantage \citep{fudenberg1985preemption}, and undermine waiting for technology to be proven \citep{hoppe2000second}.
	
	Throughout, we apply the model to two important examples of technology adoption---one from recent history, and the other contemporary. 
	
	\bigskip
	\emph{1. Mortgage securitization and the 2008 financial crisis}
	
	The first application is the adoption of pricing models that facilitated the credit boom preceding the financial crisis of 2008. Gaussian copula models in particular were developed in the late 1950's but became popular with practioners in the early 2000's, and were ubiquitous in pricing collateralized debt obligation (CDO) including securities backed by subprime mortgages.\footnote{\citet{coval_economics_2009} provide a primer on securitized products and their role in the crisis; see also \citet{zimmer_role_2012}.} This technology became public knowledge in the finance community \citep{mackenzie_`formula_2012}\footnote{\citet{mackenzie_`formula_2012} study the use of Gaussian copula models and their role in the 2008 financial crisis, reporting on 95 interviews of financial services workers, of which 29 took place before July 2007.} and allowed risky loans to be bundled into seemingly safer securities, making them palatable to investors with less appetite for risk. But when more borrowers than predicted defaulted at once, the securitized products formed from their loans lost much of their value. A large asset price shock does not necessarily imply mistakes in modeling. However, \citet{zimmer_role_2012} finds that the degree to which house prices are related is insufficiently captured by a Gaussian copula, even using data available before 2002. And industry practitioners expressed ``pervasive dissatisfaction'' with Gaussian copula models even as early as 2006 \citep{mackenzie_`formula_2012}. So despite uncertainty and even pessimism about the use of a new financial technology, its use continued.\footnote{In fact, traders who wanted to use models other than Gaussian copulas had trouble convincing risk controllers that their positions were properly hedged \citep{mackenzie_`formula_2012}. So while using the new technology was essentially free, \emph{not} using it was made even more difficult by risk policies than this paper assumes.} But why? This paper provides an answer to this puzzle, modeling an agency friction that can cause a particularly useful innovation to precipitate a downturn in production. The model is simple, and cannot capture all the subtlety of model selection and evaluation, let alone the myriad other factors contributing to the financial crisis. Nonetheless, it cleanly illuminates an important tension: a technology's popularity can belie its unsoundness.  
	
	\bigskip
	\emph{2. Generative artificial intelligence}
	
	This friction is also useful in understanding risk in the adoption of generative artificial intelligence (AI). In response to a user prompt written in plain language, generative AI such as large language models (LLMs) can quickly produce responses that read as though written by a human, accomplishing tasks such as summarizing documents \citep{zhang2024benchmarking} and generating computer code \citep{idrisov2024}. Designed to mimic human intelligence, the range of industries and activities this technology may improve is vast, if still limited currently \citep{mcelheran2024ai}, and informal adoption is growing rapidly \citep{bick2024rapid}.\footnote{While this paper focuses on AI as a tool to enhance existing productive processes, it may also replace workers \citep{hui2024short}.} Despite generating responses that are facially plausible, one known issue with this technology, sometimes called ``hallucination,'' is that the output can include text that facially seems reasonable but actually includes factual inaccuracies (summarized by e.g. \cite{ji2023survey}). In one well-publicized example, a legal brief created using generative AI included quotations and citations of legal opinions that did not exist \citep{weiser2023}. Given the rapid pace at which AI is being both developed and adopted, hallucination may not turn out to be the greatest risk it poses. Nonetheless, it provides another good example of how new technology can reduce productivity if improperly vetted or understood. This paper will show how the widespread applicability of generative AI may increase the risk that it is understudied and thus misused. Applying the paper's model to an emerging technology entails more speculation, but in return can provide insights that are more useful in the present.\footnote{Other factors moderating AI adoption range from the task structure of a firm \citep{acemoglu2022artificial} to local social trust \citep{tubadji2021cultural}.}
	
	In the model, even workers who find out that the  technology is flawed may continue to use it, since they do not fear punishment for doing so. This behavior has a flavor of herding; \citet{devenow_rational_1996} summarize papers on herding in financial markets.\footnote{The mechanism is also similar to herding in crime, where the presence of other criminals encourages criminal activity by reducing the chance of punishment \citep{dipasquale_angeles_1998}. This paper's relative contribution is to highlight the significance of agent anonymity, and the application to technology adoption.} Others (e.g. \citet{acharya_causes_2009} and \citet{rajan_bankers_2008}) have noted that analysts' compensation structures gave little incentive to care about the long-term performance of their work (a theme repeated here), but do not explain why reputation concerns or the threat of firing were not more effective restraints. This paper specifically explains why the threat of firing workers for poor performance may have been inadequate in these cases. Furthermore, it explicitly links a production downturn to the advent of a new technology, potentially helping to explain the timing of the financial crisis.
	
	It also illuminates several possible solutions to the threat of coordinated shirking. First, expectations are important: if workers for some reason expect that nobody else is shirking, they too will be diligent. Second, shifting worker compensation from variable (bonuses) to fixed (salary) may exacerbate shirking, as it uncouples workers' incentives from their output. Finally, the problem of coordinated shirking can be resolved if there is a commonly known complete strict ordering of workers (such as seniority) that the firm can use as a basis for punishment. The most senior person of any group of prospective shirkers will be motivated to exert effort, unraveling the shirking behavior of the group. 
	
	Section \ref{sec:Model} introduces the model. Section \ref{sec:Results} derives conditions for equilibria and calculates production loss compared to the first-best case. Finally, Section \ref{sec:Applications-and-discussion} applies the model to the financial crisis and AI adoption and discusses policy implications. Thoughout, proofs and derivations are deferred to the Appendix.
	
	\section{Model\label{sec:Model}}
	
	There exists a continuum of risk-neutral agents of measure one,\footnote{Using a continuum of agents rather than a finite set of discrete agents eases the analysis, but is not qualitatively important.} indexed by $i\in\left[0,1\right]$, and one risk-neutral principal. The agent maximizes expected wages and the principal maximizes expected output less the summed wages paid to the agents. The timing is a one-shot model of research, technology adoption, and production, detailed below.
	\subsection{Production technology}
	
	At the start of the model, new technology arrives for some workers: specifically, agents with index $i\leq h$ exogenously receive access. The parameter $h$ thus determines how widely available the technology is. In the context of mortgage securitization, $h$ might be the fraction of workers at a given firm for whom a new pricing model is at least facially useful; in the example of AI adoption, it might be the fraction of workers with access to a particular LLM. The effect of $h$ on equilibria is the focus of this paper. 
	
	The new technology is `good' with probability $\pi$. A good new technology yields production of $1+g$, but a bad new technology yields production of $0$. If an agent elects not to use the new technology (or it is not available to that agent), production is 1. An individual agent $i$'s production is thus
	\begin{equation}\label{eq:production}
		f_{i} = \begin{cases} 
			1
			& \text{if new technology is not used;}\\
			0 
			& \text{if new technology is available, used, and bad;} \\
			1+g 
			& \text{if new technology is available, used, and good.} \\
		\end{cases}
	\end{equation}
	
	The principal owns the output of production. 
	
	\subsection{Research effort and shirking}
	
	When given the opportunity to use a new technology ($i\leq h$), agents cannot directly observe whether it is good. However, they can each undertake costly effort to obtain a noisy, binary signal. Specifically, by paying a cost $c\geq0$, an agent learns the value of $\eta$, where $\eta=1$ indicates that the new technology is good and $\eta=0$ indicates that it is not. The chance that the signal is wrong (regardless of what it says) is $\epsilon\in\left[0,\frac{1}{2}\right]$. Forgoing effort will, as is usual in the literature, be called `shirking.' Assume that the signal is worth heeding and that research is socially optimal---this is specified in detail later in Equations \ref{eq:growth} and \ref{eq:cost}.
	
	\subsection{The principal}
	
	Before production is realized, the principal pays $w$ to each agent who used the new technology---a higher wage for higher expected productivity. The principal then receives the fruits of production, $f=\int_{0}^{1}f_{i}\text{d}i.$
	
	The principal cannot observe each agent's effort, but can eventually observe
	their output, and decide whom to fire and replace. There may be some workers who are nearby and qualified for the job whom the principal can use to replace staff at low cost. Replacing large numbers of workers, though, may oblige the firm to pay relocation or retraining costs for some of the new hires. Assume that firing workers is costless, and a unit mass of potential replacements exists. The principal can observe the cost of hiring each potential replacement $q\left(z\right)$, and the total cost of using the set of replacements $Z\subset\left[0,1\right]$ is $\int_Z q\left(z\right)\text{d}z$.
	
	The principal does not have to punish all agents who fail to produce, but rather can punish failure at a rate $\gamma$, which may depend on the number of potential failures $h$, such that an agent whose production fails is replaced with probability $\gamma\left(h \right)$. We allow the principal to commit to a strategy, which is necessary to incentivize the principal to bear the cost of punishment in this one-shot setup. However, this is just for simplicity---the game is easily extended to a repeated version in which commitment is not necessary.\footnote{An earlier draft of this paper, \citet{tenev2018coordinated}, studies a repeated game version, with similar results.} Focusing on the case with commitment also emphasizes that the driving friction here is not a commitment problem but rather the convexity of the principal's punishment function.
	
	\subsection{Timing}
	
	The timing of the model is as follows:
	\begin{enumerate}
		\item The principal commits to a firing policy $\gamma\left(h\right)$.
		\item A technology state $h$ is realized, and observed by all.
		\item Agents with access to the new technology ($i\leq h$) choose whether to exert research effort or to shirk.
		\item Each agent who chose effort pays a cost $c$ and receives a private binary signal $\eta$ of whether the new technology is good.
		\item Agents with access to the new technology choose whether or not to use it.
		\item The principal pays agents who used the new technology $w$. 
		\item Production $f_{i}$ is realized for each agent (Eq. \ref{eq:production}), and observed by the principal.
		\item A fraction $\gamma\left( h\right)$ of the agents who failed to produce are fired. The principal chooses their replacements $Z$, and pays the cost of replacement $\int_Zq\left(z\right)\text{d}z$.
		\item Agents who were not replaced receive continuation value $v_c$. 
	\end{enumerate}
	In the worker-firm context, this is equivalent to workers receiving a salary increase of $w$ for increased prospective productivity. The mechanism underlying this wage structure is not crucial to the results, so it is left unspecified (and the wage agents are paid when they don't use the new technology is normalized to zero). The continuation value $v_c$ can be interpreted as expected future wages from employment in the future, and is assumed to be high enough that the agent can be induced to exert effort through the threat of replacement, detailed below in Equation \ref{eq:vcont}.\footnote{$v_c$ does not depend on whether the agent uses the new technology now. Once production is realized it will be known if the new technology is good, and all agents with access would use it in subsequent periods if it is.} 
	
	In the context of pricing mortgage-backed securities, using the same financial technology made it easier for traders to claim the full expected future profits of such deals as justification for (current) bonuses, since the deals were priced using standard formulae \citep{mackenzie_`formula_2012}. This corresponds well to the timing of the model, in which workers receive payment for prospective output before output is actually realized. The case of compensation based on performance is discussed in \ref{subsec:The-financial-crisis}.
	
	\subsection{Strategies}
	
	Each agent with access to the new technology must choose effort or shirking, which may depend on how widely useful the technology is ($h$). For each realization of signal $\eta$ as well as a null signal (if the agent shirks), the agent must choose whether to use the new technology---a choice that may also depend on $h$. 
	
	
	Given that a fraction $y$ of agents produced successfully, the principal's strategy consists of a rate $\gamma\left(y\right):\left[0,1\right]\rightarrow\left[0,1\right]$ which determines the fraction of agents whose production failed to punish by replacing them. If all agents pursue the same strategy, this is equivalent to writing $\gamma\left(h\right)$, since either $h$ agents will fail or none will. We focus on Nash equilibrium as the solution concept.
	
	\section{Results\label{sec:Results}}
	
	\subsection{First-best}
	
	Fix a new technology $h$ and focus on the problem of an agent who has access to it. 
	
	\begin{lemma}\label{lem:efficientsignal}It will be efficient for agents who exert effort to act according to the signal received if 
		
		\begin{equation}
			g\in\left(\frac{\epsilon}{1-\epsilon},\frac{1-\epsilon}{\epsilon}\right).\label{eq:growth}
		\end{equation}
	\end{lemma}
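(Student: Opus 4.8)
The plan is to analyze the decision of an agent who has already paid $c$ and observed the binary signal $\eta$, and to ask when the contingent rule ``use the technology if and only if $\eta=1$'' maximizes expected output. Since production equals $1$ when the technology is not used and has expected value (posterior probability of ``good'') times $(1+g)$ when it is used, efficiency of the signal-contingent rule reduces to two sequential-rationality comparisons, one per signal realization. The first step is therefore to compute the two posteriors by Bayes' rule from the symmetric error structure $\Pr(\eta=1\mid\text{good})=\Pr(\eta=0\mid\text{bad})=1-\epsilon$: writing the prior as $\pi$, I obtain $\Pr(\text{good}\mid\eta=1)=\frac{(1-\epsilon)\pi}{(1-\epsilon)\pi+\epsilon(1-\pi)}$ and $\Pr(\text{good}\mid\eta=0)=\frac{\epsilon\pi}{\epsilon\pi+(1-\epsilon)(1-\pi)}$.

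Next I would write the two comparisons. After a favorable signal $\eta=1$, using is (weakly) efficient iff $\Pr(\text{good}\mid\eta=1)(1+g)\geq 1$; after an unfavorable signal $\eta=0$, not using is efficient iff $\Pr(\text{good}\mid\eta=0)(1+g)\leq 1$. Each expected-output expression is strictly increasing in $g$, so each comparison collapses to a one-sided threshold: clearing denominators in the first gives $g\geq\frac{\epsilon(1-\pi)}{(1-\epsilon)\pi}$, and in the second gives $g\leq\frac{(1-\epsilon)(1-\pi)}{\epsilon\pi}$. Intersecting the two half-lines yields the interval of $g$ for which following the signal is the efficient contingent plan. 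Under the symmetric prior (prior odds $\pi/(1-\pi)=1$) the $\pi$ factors cancel and the thresholds become exactly $\frac{\epsilon}{1-\epsilon}$ and $\frac{1-\epsilon}{\epsilon}$, recovering \eqref{eq:growth}; I would make explicit that this is the normalization under which ``the signal is worth heeding'' is being assessed, since for non-symmetric priors the endpoints shift with $\pi$.

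Finally I would verify nonemptiness and interpret the interval, since an empty interval would make the lemma vacuous. Nonemptiness amounts to $\frac{\epsilon}{1-\epsilon}<\frac{1-\epsilon}{\epsilon}$, i.e. $\epsilon^{2}<(1-\epsilon)^{2}$, which holds precisely because $\epsilon<\tfrac12$; at $\epsilon=\tfrac12$ the signal is uninformative and the interval degenerates to a point, while at $\epsilon=0$ it opens to $(0,\infty)$, both the expected limiting behaviors. I expect the main obstacle to be conceptual rather than computational: pinning down exactly what ``efficient to act according to the signal'' is compared against, namely establishing that the relevant alternatives are the two constant rules ``always use'' and ``never use'' (equivalently, checking sequential rationality signal-by-signal), and confirming that the opposite rule ``use iff $\eta=0$'' is dominated for every admissible $g$—a quick computation shows the gain from following over anti-following equals $(1-2\epsilon)\,[\pi g+(1-\pi)]>0$—together with making transparent the prior normalization implicit in the stated bounds.
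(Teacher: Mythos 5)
Your overall decomposition is exactly the paper's: compare, signal by signal, the expected output from using the technology against the guaranteed output of $1$, and intersect the two one-sided conditions on $g$. The computational steps within your formulation are also correct. But there is a genuine mismatch in the key input. The paper defines $\epsilon$ as ``the chance that the signal is wrong (regardless of what it says),'' i.e.\ as an error probability conditional on the signal realization, so that $\Pr(\text{good}\mid\eta=1)=1-\epsilon$ and $\Pr(\text{good}\mid\eta=0)=\epsilon$ hold directly, with no Bayes step and no appearance of $\pi$. The paper's proof then reads off $(1-\epsilon)(1+g)>1$ and $\epsilon(1+g)<1$ immediately, giving the stated bounds $\frac{\epsilon}{1-\epsilon}$ and $\frac{1-\epsilon}{\epsilon}$ for all $\pi$. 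You instead take $\epsilon$ to be the error rate conditional on the true state, apply Bayes' rule, and obtain thresholds $\frac{\epsilon(1-\pi)}{(1-\epsilon)\pi}$ and $\frac{(1-\epsilon)(1-\pi)}{\epsilon\pi}$, which coincide with the lemma's bounds only under the extra normalization $\pi=\tfrac12$. That normalization is not an assumption of the paper --- indeed the narrative treats the bad state as rare, suggesting $\pi$ near $1$ --- so as written your argument proves a different (weaker, $\pi$-dependent) statement and recovers the lemma only on a measure-zero slice of the parameter space. The fix is simply to adopt the paper's reading of $\epsilon$, under which your two sequential-rationality comparisons collapse to the paper's two lines.

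In partial mitigation, the ambiguity you stumbled on is real: the paper's own proofs of Lemma 2 and Lemma 6 use probabilities such as $\pi(1-\epsilon)$ for ``good and favorable signal'' and $(1-\pi)\epsilon$ for ``bad and favorable signal,'' which is the state-conditional (likelihood) formulation you adopted, and which is inconsistent with the posterior formulation used in the proof of Lemma 1 unless $\pi=\tfrac12$. So your instinct to flag the prior-dependence was sound, and your additional observations (nonemptiness of the interval from $\epsilon<\tfrac12$, and dominance of the anti-following rule with gain $(1-2\epsilon)\left[\pi g+(1-\pi)\right]$) are correct and go beyond what the paper checks. But for the lemma as stated, you need the posterior interpretation of $\epsilon$; with it, the proof is two lines and matches the paper's.
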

	Equation \ref{eq:growth} describes a new technology that can increase productivity enough ($g$) to justify trusting the signal, but not so much that it should be used regardless of what the signal says.
	
	\begin{lemma}\label{lem:cost}
		Exerting effort to obtain the signal is efficient if  
		\begin{equation}
			c<\left(1-\pi\right)\left(1-\epsilon\right)-\pi\epsilon g.\label{eq:cost}
		\end{equation}
	\end{lemma}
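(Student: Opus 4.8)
The plan is to compare two social surplus values---expected production net of effort cost---for an agent with access to the technology: one under the effort strategy (pay $c$, observe $\eta$, and act on it) and one under the best uninformed (shirking) strategy. Exerting effort is efficient precisely when the former exceeds the latter, and the goal is to show this reduces to inequality \ref{eq:cost}. By Lemma \ref{lem:efficientsignal}, condition \ref{eq:growth} guarantees that the efficient response to the signal is to adopt the technology if and only if $\eta=1$, so I would take this as the action rule under effort and not have to re-optimize over responses to the signal.

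First I would compute expected production under effort by conditioning on the true state and on whether the signal is correct. With probability $\pi$ the technology is good, giving production $1+g$ when the signal is correct (probability $1-\epsilon$, so the agent adopts) and $1$ when it is wrong (probability $\epsilon$, so the agent abstains). With probability $1-\pi$ the technology is bad, giving production $1$ when the signal is correct (abstain) and $0$ when it is wrong (adopt). Subtracting $c$, the expected social surplus under effort is
\begin{equation*}
	\pi\bigl[(1-\epsilon)(1+g)+\epsilon\bigr]+(1-\pi)(1-\epsilon)-c
	=1-\epsilon+\pi\epsilon+\pi(1-\epsilon)g-c.
\end{equation*}

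Next I would evaluate the best shirking strategy. An uninformed agent chooses between adopting, with expected production $\pi(1+g)$, and abstaining, with production $1$. In the regime of interest the prior favors adoption, i.e. $\pi(1+g)\geq 1$, so the efficient uninformed action is to adopt and shirking yields $\pi(1+g)$. Effort is then efficient exactly when
\begin{equation*}
	1-\epsilon+\pi\epsilon+\pi(1-\epsilon)g-c>\pi(1+g),
\end{equation*}
and isolating $c$ by routine rearrangement yields $c<(1-\pi)(1-\epsilon)-\pi\epsilon g$, which is \ref{eq:cost}.

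The step requiring the most care is pinning down the shirker's optimal uninformed action: the threshold in \ref{eq:cost} corresponds to the case $\pi(1+g)\geq 1$, in which abstaining is dominated and the value of the signal comes entirely from learning when to \emph{avoid} a bad technology. I would therefore state this mild assumption explicitly (the technology is attractive enough to be adopted by default) and note that it is the economically relevant case, consistent with the paper's narrative that a popular technology is one workers would use absent any vetting. Were the opposite to hold, the comparison would instead be against a surplus of $1$ and the threshold would change accordingly, so flagging the assumption is what makes the stated inequality exact rather than merely sufficient.
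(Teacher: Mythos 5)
Your proof is correct and follows essentially the same route as the paper's: both compute expected production under effort (adopt iff $\eta=1$) and compare it to the shirk-and-adopt benchmark $\pi\left(1+g\right)$, with the difference reducing exactly to the right-hand side of \ref{eq:cost}. Your explicit flagging of the implicit assumption $\pi\left(1+g\right)\geq 1$ --- needed for adoption, rather than abstention, to be the relevant uninformed benchmark, and without which \ref{eq:cost} is not by itself sufficient for efficiency --- is a genuine point that the paper's one-line proof silently glosses over.
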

	In other words, the signal cost ($c$) is less than the productivity gained by avoiding bad technology minus the productivity missed by forgoing good new technology owing to incorrect signals.  For the remainder of the paper, assume that \ref{eq:growth} and \ref{eq:cost} hold: researching the new technology is socially optimal. 
	
	If all agents were exerting effort, any failures in production would owe to chance. In the first-best case all agents would exert effort (since it is efficient) and the principal would never replace them, since it is costly. Given technology $h$, expected aggregate production in the first-best case is
	\begin{equation}
		\int_{0}^{1}f_{i}\text{d}i=\left(1-h\right)+h\left[\left(1+\pi g\right)\left(1-\epsilon\right)+\pi\epsilon\right].\label{eq:first best}
	\end{equation}
	
	Total welfare is simply output less agent effort costs, since wages and the continuation value are a transfer. Total welfare in the first-best case is $\left(1-h\right)+h\left[\left(1+\pi g\right)\left(1-\epsilon\right)+\pi\epsilon-c\right].$

	\subsection{Equilibria}
	
	As in other principal-agent settings, it is a Nash equilibrium here for agents to always shirk and for the principal to never punish (i.e. fire and replace them). However, the more interesting question is what conditions will induce effort, as it is socially optimal. The rest of the paper focuses on equilibrium with effort.
	
	\subsubsection{The principal's problem}
	
	If the principal chooses to replace some agents, she will optimally use the least costly replacements first. The lowest cost of replacing a fraction $x\in\left[0,1\right]$ of its workers is $r\left(x\right) := \text{min}_{\rho\subset\left[0,1\right]}\int_\rho q\left(z\right)\text{d}z$ such that the measure  $\lambda\left(\rho\right)=x$. If the principal uses the least costly replacements first, then increasing the measure of agents replaced requires using increasingly costly replacements. 
	\begin{lemma}\label{lem:convex} The total cost of replacement $r\left(x\right)$ is convex in the measure $x$ of agents replaced.\end{lemma}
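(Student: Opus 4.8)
The plan is to reduce the optimization defining $r$ to the \emph{increasing rearrangement} of the replacement-cost function $q$, and then read off convexity from the monotonicity of that rearrangement. Intuitively, the principal fills each additional unit of replacements with the cheapest remaining hire, so the marginal cost $r'\left(x\right)$ is exactly the cost of the $x$-th cheapest replacement, which is nondecreasing in $x$; integrating a nondecreasing marginal cost yields a convex total cost. This matches the informal statement preceding the lemma that ``increasing the measure of agents replaced requires using increasingly costly replacements.''

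To make this precise, first I would establish a ``cheapest-first'' (bathtub) principle characterizing the minimizer. Let $G\left(t\right) := \lambda\left(\left\{z\in\left[0,1\right]: q\left(z\right)\leq t\right\}\right)$ denote the distribution function of the costs, and let $q^{*}\left(s\right) := \inf\left\{t: G\left(t\right)\geq s\right\}$ be its generalized inverse, i.e.\ the nondecreasing rearrangement of $q$. The claim is that for each $x$ the infimum in the definition of $r\left(x\right)$ is attained by a sublevel set $\left\{z: q\left(z\right)\leq t_{x}\right\}$, trimmed on the level set $\left\{z: q\left(z\right)=t_{x}\right\}$ so as to have exactly measure $x$, where $t_{x}=q^{*}\left(x\right)$; and that consequently
\begin{equation*}
	r\left(x\right)=\int_{0}^{x}q^{*}\left(s\right)\text{d}s.
\end{equation*}
The argument is a standard exchange inequality: given any admissible $\rho$ with $\lambda\left(\rho\right)=x$ that places mass on some region where $q>t_{x}$ while omitting mass on a region where $q<t_{x}$, transferring that mass keeps the measure fixed and strictly lowers the integral, so no such $\rho$ can beat the sublevel set.

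Given this representation, convexity is immediate: $q^{*}$ is nondecreasing by construction, and $r\left(x\right)=\int_{0}^{x}q^{*}$ is absolutely continuous with a.e.\ derivative $r'\left(x\right)=q^{*}\left(x\right)$. A nondecreasing derivative (equivalently, monotone secant slopes) characterizes convexity, completing the proof. I would also record the mild regularity assumptions that legitimize the steps: $q$ should be measurable---it is an observable cost schedule---and integrable, so that $G$, $q^{*}$, and the integrals above are well defined.

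The step I expect to be the main obstacle is the bathtub principle itself, specifically handling ties, i.e.\ atoms in the distribution $G$ corresponding to positive-measure level sets $\left\{z:q\left(z\right)=t_{x}\right\}$. On such a set the optimal $\rho$ is not unique, since any measure-$x$ selection among the tied replacements achieves the same cost; the exchange argument must therefore be stated in terms of the integral rather than the chosen set, and one must verify that $q^{*}\left(x\right)$ correctly identifies the threshold cost even where $G$ is flat or jumps. Once ties are dealt with, the remainder is routine.
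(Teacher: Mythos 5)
Your proof is correct, but it takes a genuinely different route from the paper's. The paper argues by contradiction directly on secant slopes: if $r$ were concave anywhere, there would be $x<x^{\prime}<x^{\prime\prime}$ with nested minimizers $\rho\subset\rho^{\prime}\subset\rho^{\prime\prime}$ whose incremental average costs decrease, and swapping the cheapest agents of $\rho^{\prime\prime}\setminus\rho^{\prime}$ for the most expensive agents of $\rho^{\prime}\setminus\rho$ would produce a measure-$x^{\prime}$ set cheaper than $r\left(x^{\prime}\right)$, contradicting minimality. You instead first characterize the minimizer outright via the bathtub principle, obtain the explicit representation $r\left(x\right)=\int_{0}^{x}q^{*}\left(s\right)\text{d}s$ with $q^{*}$ the nondecreasing rearrangement of $q$, and read off convexity from the monotone integrand. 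Both proofs rest on the same cheapest-first exchange idea, but yours buys more: an explicit formula for $r$ and the identification $r^{\prime}\left(x\right)=q^{*}\left(x\right)$ a.e., which is directly relevant to the paper's later use of $r^{\prime}\left(0\right)$ in the sufficient condition for $\Gamma^{\star}$ to be non-empty. The price is the extra measure-theoretic care you correctly flag (measurability and integrability of $q$, trimming positive-measure level sets using the non-atomicity of Lebesgue measure). The paper's argument is more elementary and avoids rearrangements, though it quietly assumes that minimizers exist and can be taken nested---points your constructive characterization actually settles. Your proof is a valid, and in some respects more informative, substitute.
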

	
	The rest of the principal's strategy consists of a policy $\gamma\left( h \right):\left[0,1\right]\rightarrow\left[0,1\right]$:
	given the fraction of projects that failed, she has to decide what
	fraction of the failures to punish/replace ($\gamma$). Assume that punishing at rate $\gamma\geq\underbar{\ensuremath{\gamma}}$ will induce the agent to exert effort (conditions for this to hold will be derived in \ref{subsec:The-agent's-problem}). Since punishment is costly, the principal will in all cases choose either $\gamma=0$ or $\gamma=\underbar{\ensuremath{\gamma}}$; anything between these two would be costly without inducing effort, while anything greater than $\underbar{\ensuremath{\gamma}}$ would be effective but unnecessarily costly. The question then becomes when the principal should punish and when she should not.
	
	Given a policy $\gamma\left(\bar{f}\right)$, let $\Gamma$ be the subset of the unit interval for which the principal punishes agents: $
	\Gamma\equiv\left\{ h:\gamma\left( h \right)=\underbar{\ensuremath{\gamma}}\right\}$. For $h\in\Gamma$, agents know that they will be punished for failure with probability $\underbar{\ensuremath{\gamma}}$, and are thus induced to exert effort in these states. 
	
	\begin{lemma}\label{lem:punish}
		In equilibrium, the principal will punish agents only in states $h$ such that
		\begin{equation}
			\frac{\left(1-\pi\right)\left(1-\epsilon\right)-\pi\epsilon g}{\left(1-\pi\right)\epsilon}h \geq r\left(\underbar{\ensuremath{\gamma}}h\right).\label{eq:IRcommit}
		\end{equation}
	\end{lemma}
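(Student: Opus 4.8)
The plan is to frame the principal's choice in state $h$ as a binary comparison between the two candidate policies identified in the setup---committing to punish at rate $\underbar{\ensuremath{\gamma}}$ (which induces effort) versus setting $\gamma=0$ (under which agents shirk)---and to show that the former dominates exactly when \eqref{eq:IRcommit} holds. First I would pin down behavior in each regime. Under $\gamma=0$ the replacement threat vanishes, so agents shirk; facing no signal but a positive adoption wage and no downside from failure, they use the new technology, so each of the $h$ agents with access produces $\pi(1+g)$ in expectation. Under $\gamma=\underbar{\ensuremath{\gamma}}$ the agents research and act on their signal, so by the accounting behind \eqref{eq:first best} each produces $(1+\pi g)(1-\epsilon)+\pi\epsilon$ in expectation. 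The per-agent output gain from inducing effort is therefore $(1+\pi g)(1-\epsilon)+\pi\epsilon-\pi(1+g)$, which I would simplify to $(1-\pi)(1-\epsilon)-\pi\epsilon g$---precisely the gross research benefit appearing in Lemma \ref{lem:cost}---so that the aggregate output gain is $h\big[(1-\pi)(1-\epsilon)-\pi\epsilon g\big]$.

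Next I would compute the expected replacement cost that punishment entails. Because the signal is common to the researching agents, failures are all-or-nothing: under effort every one of the $h$ agents produces zero exactly when the technology is bad and the signal is wrong, an event of probability $(1-\pi)\epsilon$, and in that event the principal replaces a fraction $\underbar{\ensuremath{\gamma}}$ of the failing mass $h$ at cost $r(\underbar{\ensuremath{\gamma}}h)$ (convex by Lemma \ref{lem:convex}); in every other state no agent fails and no cost is borne. Hence the expected cost of the punishment policy is $(1-\pi)\epsilon\, r(\underbar{\ensuremath{\gamma}}h)$, whereas $\gamma=0$ incurs no replacement cost. This is where the ``all or none'' structure of the success rate $y$ does the work, collapsing the cost to a single term $r(\underbar{\ensuremath{\gamma}}h)$.

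It then remains to compare the two policies. Treating the adoption wages as a transfer held fixed across the comparison, the principal prefers to punish precisely when the aggregate output gain weakly exceeds the expected replacement cost, i.e. when $h\big[(1-\pi)(1-\epsilon)-\pi\epsilon g\big]\ge (1-\pi)\epsilon\, r(\underbar{\ensuremath{\gamma}}h)$. Dividing through by $(1-\pi)\epsilon>0$ yields \eqref{eq:IRcommit}, and the same manipulation shows the reverse inequality makes $\gamma=0$ the preferred policy, giving the ``only in states $h$ such that'' characterization.

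I expect the main obstacle to be the bookkeeping around adoption wages: since the mass of agents who use the technology differs across the two regimes, one must verify that the $w$ payments do not reorder the principal's ranking, so that the decisive trade-off is genuinely output gain against replacement cost. A secondary point to handle cleanly is justifying that shirking agents strictly prefer to use the flawed technology (the adoption wage with no failure penalty) and that $\gamma\in\{0,\underbar{\ensuremath{\gamma}}\}$ exhaust the policies worth considering---both of which the surrounding text already supplies.
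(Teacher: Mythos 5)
Your proposal is correct and follows essentially the same route as the paper's proof: compare the principal's expected payoff under punishment-at-rate-$\underbar{\ensuremath{\gamma}}$ (effort induced) against no punishment (shirking and adoption), obtain the per-agent output gain $\left(1-\pi\right)\left(1-\epsilon\right)-\pi\epsilon g$ and the expected replacement cost $\left(1-\pi\right)\epsilon\, r\left(\underbar{\ensuremath{\gamma}}h\right)$, and divide through by $\left(1-\pi\right)\epsilon$. The paper likewise nets out the wage bill as a common $wh$ term in both regimes, so the bookkeeping concern you flag does not alter the condition there either.
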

	This characterizes the technology states in which the manager will want to commit to firing and replacing workers.
	
	\begin{defn}
		Let $\Gamma^{\star}$ be the set of all $h\in\left[0,1\right]$ that satisfy Equation \ref{eq:IRcommit}---the set of $h$ for which the principal can credibly threaten punishment under commitment.
	\end{defn}
	
	\begin{prop}\label{prop:threshold}
		In an equilibrium with nonzero punishment, the principal's optimal punishment policy $\gamma\left(h\right)$ will be a threshold rule:
		\begin{equation}
			\gamma^{\star}\left(h\right)=\begin{cases}
				\underbar{\ensuremath{\gamma}} & h\in\left[0,\tilde{h}\right)\\
				\gamma\in\{0,\underbar{\ensuremath{\gamma}}\}  & h=\tilde{h}\\
				0 & h\in\left(\tilde{h},1\right]
			\end{cases}.\label{eq:threshold}
		\end{equation}
		
		(The optimal policy at exactly the threshold $\tilde{h}$ can be $\underbar{\ensuremath{\gamma}}$ only if $\tilde{h}$ satisfies Equation \ref{eq:IRcommit}; otherwise, it will be 0.) 
	\end{prop}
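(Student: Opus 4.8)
The plan is to reduce the whole proposition to a single structural fact: the set $\Gamma^{\star}$ of technology states in which the principal is willing to commit to punishment is a ``lower'' interval $\left[0,\tilde{h}\right]$. Once that is in hand, the threshold rule is immediate. The first step is to assemble what is already available. By the discussion preceding Lemma \ref{lem:punish}, the principal never uses an interior firing rate, so pointwise $\gamma\left(h\right)\in\{0,\underbar{\ensuremath{\gamma}}\}$. And Equation \ref{eq:IRcommit} is precisely the comparison of the benefit of inducing effort (its left-hand side) against the cost of the punishment needed to sustain it (its right-hand side). Reading it this way, committing to $\gamma=\underbar{\ensuremath{\gamma}}$ is optimal exactly when $h\in\Gamma^{\star}$, and committing to $\gamma=0$ is optimal exactly when $h\notin\Gamma^{\star}$. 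All the work therefore goes into describing the shape of $\Gamma^{\star}$.

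To do this I would define
\[
	\phi\left(h\right):=\frac{\left(1-\pi\right)\left(1-\epsilon\right)-\pi\epsilon g}{\left(1-\pi\right)\epsilon}\,h-r\left(\underbar{\ensuremath{\gamma}}h\right),
\]
so that $\Gamma^{\star}=\left\{ h:\phi\left(h\right)\geq0\right\}$. The first term is linear in $h$, and by Lemma \ref{lem:cost} its slope is strictly positive, since the numerator exceeds $c\geq0$. The second term is the composition of the convex function $r$ (Lemma \ref{lem:convex}) with the linear map $h\mapsto\underbar{\ensuremath{\gamma}}h$, hence convex in $h$. A linear function minus a convex function is concave, so $\phi$ is concave; and because replacing no one costs nothing, $r\left(0\right)=0$, giving $\phi\left(0\right)=0$.

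The key step is then a one-line consequence of concavity: the superlevel set $\left\{ \phi\geq0\right\}$ of a concave function is convex, and since it contains $0$ it is an interval $\left[0,\tilde{h}\right]$ with $\tilde{h}:=\sup\Gamma^{\star}$. A convex function is continuous on the interior of its domain, so $\phi$ is continuous and $\Gamma^{\star}$ is closed, whence $\tilde{h}\in\Gamma^{\star}$. Combining this with the first step yields Equation \ref{eq:threshold}: $\gamma^{\star}=\underbar{\ensuremath{\gamma}}$ on $\left[0,\tilde{h}\right)$, where $\phi>0$ and punishment strictly dominates, and $\gamma^{\star}=0$ on $\left(\tilde{h},1\right]$, where $\phi<0$ and punishment is too costly. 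At the single point $\tilde{h}$, closedness gives $\phi\left(\tilde{h}\right)\geq0$, with equality whenever $\tilde{h}<1$, so the principal is exactly indifferent and both values are optimal---matching the parenthetical claim that $\underbar{\ensuremath{\gamma}}$ is admissible at $\tilde{h}$ precisely when Equation \ref{eq:IRcommit} holds there.

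I do not expect a genuine computational obstacle; the argument is essentially the observation that a concave function vanishing at the origin has a connected nonnegative set. The one point demanding care is the second half of the first step---reading Equation \ref{eq:IRcommit} as a \emph{sharp} benefit-versus-cost comparison, so that punishing is optimal on \emph{all} of $\Gamma^{\star}$ and nowhere else, rather than merely as the necessary condition recorded in Lemma \ref{lem:punish}. This two-sided reading is exactly what rules out non-monotone policies that would punish on a proper subset of $\left[0,\tilde{h}\right]$, and it is the substantive economic content behind the otherwise routine concavity argument.
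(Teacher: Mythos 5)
Your proof is correct and follows essentially the same route as the paper's: pointwise binary optimality from the benefit--cost comparison underlying Equation \ref{eq:IRcommit}, plus the observation that convexity of $r$ (Lemma \ref{lem:convex}) together with $r\left(0\right)=0$ makes $\Gamma^{\star}$ a lower interval---your ``superlevel set of a concave function vanishing at the origin'' is just a repackaging of the paper's ``if $h$ satisfies Equation \ref{eq:IRcommit} then so does any smaller $h^{\prime}$.'' The only overreach is your claim that $\Gamma^{\star}$ is necessarily closed: Lemma \ref{lem:convex} gives convexity but not continuity of $r$ at the right endpoint of its domain, which is why the paper hedges on whether $\tilde{h}$ itself satisfies Equation \ref{eq:IRcommit} rather than asserting $\tilde{h}\in\Gamma^{\star}$.
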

	
	If $r\left(\cdot\right)$ is differentiable at zero, a sufficient condition for the set of states in which punishment is feasible $\Gamma^{\star}$ to be non-empty is $\frac{\left(1-\pi\right)\left(1-\epsilon\right)-\pi\epsilon g}{\left(1-\pi\right)\epsilon \underbar{\ensuremath{\gamma}}}>r^{\prime}\left(0\right).$ Otherwise, Equation \ref{eq:IRcommit} would hold with equality only at $h=0$, after which the costs of punishment would outweigh the benefits. 
	
	\subsubsection{The agent's problem\label{subsec:The-agent's-problem}}
	
	Assume the chance of punishment is $\gamma\left(h\right)$ for agents who fail to produce, which may depend on the technology state $h$, as this governs the maximum number of agents whose production can fail in the period. An agent with access to the new technology will exert effort if the expected value of doing so exceeds the expected value of shirking. 
	
	\begin{lemma}\label{lem:effort}
		Given chance of punishment $\gamma\left(h\right)$ for failed production, the agent will exert effort only if
		\begin{equation}
			\left(1-\pi\right)\left(1-\epsilon\right) \gamma\left(h\right)v_c \geq c + \left(1-\pi\left(1 - \epsilon\right) - \left(1 - \pi\right)\epsilon\right)w  . \label{eq:effort}
		\end{equation}
	\end{lemma}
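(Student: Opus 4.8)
The plan is to compute the agent's expected payoff from exerting effort and compare it to the payoff from the most tempting deviation, then rearrange the resulting inequality into Equation \ref{eq:effort}. Since the lemma states a necessary condition (``only if''), it suffices to exhibit one deviation the agent must not prefer, and I claim the binding one is to \emph{shirk but still use} the new technology---saving the cost $c$, collecting the wage $w$, and gambling that the technology is good.

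First I would compute the value of effort, $V_E$. An agent who pays $c$ and follows the signal (the behavior of interest, justified under the maintained growth assumption by Lemma \ref{lem:efficientsignal}) lands in one of four states: good technology with a correct signal (probability $\pi(1-\epsilon)$, uses it, earns $w$, succeeds); good with an incorrect signal (probability $\pi\epsilon$, declines, earns $0$, succeeds); bad with a correct signal (probability $(1-\pi)(1-\epsilon)$, declines, earns $0$, succeeds); and bad with an incorrect signal (probability $(1-\pi)\epsilon$, uses it, earns $w$, fails). The wage is received whenever the technology is used, i.e. with probability $\pi(1-\epsilon)+(1-\pi)\epsilon$, while the continuation value $v_c$ is forfeited only upon failure-then-replacement, which occurs with probability $(1-\pi)\epsilon\,\gamma(h)$. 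Collecting terms,
\begin{equation*}
  V_E = -c + \left[\pi(1-\epsilon)+(1-\pi)\epsilon\right]w + \left[1-(1-\pi)\epsilon\,\gamma(h)\right]v_c.
\end{equation*}

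Next I would compute the value of the shirk-and-use deviation, $V_{S,B}$. The agent saves $c$, always uses the technology (so always earns $w$), and fails precisely when the technology is bad, which happens with probability $1-\pi$; replacement then occurs with probability $\gamma(h)$, so $v_c$ is lost with probability $(1-\pi)\gamma(h)$. Hence
\begin{equation*}
  V_{S,B} = w + \left[1-(1-\pi)\gamma(h)\right]v_c.
\end{equation*}
For effort to be optimal the agent must not prefer this deviation, so $V_E \geq V_{S,B}$. Taking the difference, the $v_c$ terms combine as $\left[(1-\pi)\gamma(h)-(1-\pi)\epsilon\gamma(h)\right]v_c = (1-\pi)(1-\epsilon)\gamma(h)v_c$, while the wage coefficient collapses to $-\left[1-\pi(1-\epsilon)-(1-\pi)\epsilon\right]$; requiring $V_E - V_{S,B}\geq 0$ then yields exactly Equation \ref{eq:effort}.

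The main difficulty is bookkeeping rather than conceptual: one must keep straight that $w$ accrues to every agent who uses the technology (win or lose), whereas $v_c$ is lost only on an actual replacement, so the two cash flows carry different probabilities. The one genuinely substantive point is recognizing why ``shirk and use'' is the correct comparison for a necessary condition. Whatever the agent's best shirking response happens to be, its value is at least $V_{S,B}$, so effort optimality forces $V_E \geq V_{S,B}$, which is precisely the stated inequality. The complementary deviation, shirk and decline, yields $v_c$ and would generate the additional condition needed for \emph{sufficiency}, but it is not required to establish the ``only if'' claim here.
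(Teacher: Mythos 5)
Your proposal is correct and follows essentially the same route as the paper: it computes the same two expected payoffs (effort versus shirk-and-use, matching the paper's Equations \ref{eq:veffort} and \ref{eq:vshirk}) and takes their difference to obtain Equation \ref{eq:effort}. Your added remark on why shirk-and-use is the right deviation for the ``only if'' direction is a sensible clarification but does not change the argument.
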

	That is, the increased chance of continuation in the game {must outweigh the cost of effort and forgone wages owing to negative signals.
		
		\begin{defn}
			Let $\underbar{\ensuremath{\gamma}}$ be the lowest possible punishment rate which induces effort (Equation \ref{eq:effort} holds): 
			\begin{equation}
				\underbar{\ensuremath{\gamma}} = \frac{c + \left(1-\pi\left(1 - \epsilon\right) - \left(1 - \pi\right)\epsilon\right)w}{\left(1 - \pi\right)\left(1 - \epsilon\right) v_c}.\label{eq:gamma}
			\end{equation}
			
		\end{defn}
		
		It is obvious from Equation \ref{eq:effort} that if costs and wages are positive, the value of the signal alone is not enough to induce effort in the absence of punishment ($\gamma=0$). That is, only the threat of punishment can induce effort. To further ensure this is an interesting problem, assume parameters and a wage incentive $w$ such that if failure is always punished ($\gamma=1$), effort is induced: 
		\begin{equation}\label{eq:vcont}
			v_c \geq \frac{c+\left(1-\pi\left( 1-\epsilon\right)-\left(1-\pi\right)\epsilon\right)w}{\left(1-\pi\right)\left(1-\epsilon\right)}
		\end{equation}
		This ensures that $\underbar{\ensuremath{\gamma}} \in \left[0,1\right]$. In other words, assume that the noisiness of the signal is not so high as to prevent the principal from being able to induce effort. 
		
		An equilibrium with effort is then simply characterized by $\left\{ \underbar{\ensuremath{\gamma}},\tilde{h}\right\} $. 
		Here $\tilde{h}$ (given by Equation \ref{eq:IRcommit}) is the maximum fraction of agents the principal is willing to punish, and characterizes the principal's policy rule $\gamma\left(h\right)$ of the form given by Equation \ref{eq:threshold}. Below this threshold, the fraction of unproductive agents punished $\underbar{\ensuremath{\gamma}}$ is characterized by Equation \ref{eq:effort}. \textcolor{black}{But if the technology is applicable to more than $\tilde{h}$ agents, the principal is unable to induce effort.}

		\begin{corollary}\label{cor:drop}
			Expected production and welfare drop as the measure of agents the technology applies to surpasses the maximum the principal is willing to punish, $\tilde{h}$.
		\end{corollary}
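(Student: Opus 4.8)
The plan is to split the analysis at the threshold $\tilde{h}$ using the equilibrium characterization already established, and to show that both expected production and welfare, viewed as functions of $h$, jump downward as $h$ crosses $\tilde{h}$. By Proposition \ref{prop:threshold} and the definition of $\Gamma^{\star}$, for $h\leq\tilde{h}$ the principal commits to the punishment rate $\underline{\gamma}$ and agents are induced to exert effort and follow their signals, whereas for $h>\tilde{h}$ punishment is not credible ($\gamma\left(h\right)=0$) and agents shirk. The strategy is therefore to evaluate expected production and welfare separately on each side of $\tilde{h}$ and compare the one-sided limits at the threshold.

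First I would compute production. For $h\leq\tilde{h}$ the effort equilibrium reproduces the first-best actions, so expected production is given by Equation \ref{eq:first best}, namely $\left(1-h\right)+h\left[\left(1+\pi g\right)\left(1-\epsilon\right)+\pi\epsilon\right]$. For $h>\tilde{h}$ I first need to pin down what a shirking agent does at the use/no-use stage: since $\gamma\left(h\right)=0$, production never affects an agent's continuation value, so the only consideration is the wage $w\geq0$ earned by using the technology; hence shirkers use it regardless of any (absent) information. Expected production per agent with access is then $\pi\left(1+g\right)$, giving aggregate production $\left(1-h\right)+h\pi\left(1+g\right)$. The downward jump at $\tilde{h}$ is thus $\tilde{h}\left[\left(1+\pi g\right)\left(1-\epsilon\right)+\pi\epsilon-\pi\left(1+g\right)\right]=\tilde{h}\left[\left(1-\pi\right)\left(1-\epsilon\right)-\pi\epsilon g\right]$, which is strictly positive because the right-hand side of Equation \ref{eq:cost} is assumed positive (it exceeds $c\geq0$). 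So expected production drops.

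Next I would treat welfare. Using the paper's accounting in which wages and continuation values are transfers, welfare is output less effort costs. For $h\leq\tilde{h}$ this is $\left(1-h\right)+h\left[\left(1+\pi g\right)\left(1-\epsilon\right)+\pi\epsilon-c\right]$, while for $h>\tilde{h}$ shirking incurs no effort cost and welfare equals production, $\left(1-h\right)+h\pi\left(1+g\right)$. The downward jump at $\tilde{h}$ is then $\tilde{h}\left[\left(1-\pi\right)\left(1-\epsilon\right)-\pi\epsilon g-c\right]$, which is strictly positive by Equation \ref{eq:cost}. Hence welfare drops as well, provided $\tilde{h}>0$ (guaranteed when $\Gamma^{\star}$ is non-empty).

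The main obstacle is the welfare step, specifically the treatment of the principal's replacement costs. On the effort side the principal does incur replacement costs along the path where the technology turns out bad, and if these are counted as real (deadweight) rather than as transfers to the hired replacements, they must be subtracted from the effort-regime welfare; one then has to verify that the production-minus-effort-cost gain $\tilde{h}\left[\left(1-\pi\right)\left(1-\epsilon\right)-\pi\epsilon g-c\right]$ still dominates this cost. This is exactly where the characterization of $\tilde{h}$ through Equation \ref{eq:IRcommit}---which bounds the admissible replacement cost by the production gain the threat secures---must be invoked, while keeping track of the wedge between the principal's private objective (which omits agents' effort costs and nets out wages) and social welfare. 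A secondary, routine point is the behavior exactly at $h=\tilde{h}$: Proposition \ref{prop:threshold} allows $\gamma\in\left\{0,\underline{\gamma}\right\}$ there, so the drop should be stated as a comparison of one-sided limits (or of the effort and shirking equilibria evaluated at $\tilde{h}$).
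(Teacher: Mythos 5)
Your proposal is correct and follows essentially the same decomposition as the paper: expected output is $\left(1-h\right)+h\left[\left(1+\pi g\right)\left(1-\epsilon\right)+\pi\epsilon\right]$ under effort and $\left(1-h\right)+h\pi\left(1+g\right)$ under shirking, and the output drop $\tilde{h}\left[\left(1-\pi\right)\left(1-\epsilon\right)-\pi\epsilon g\right]$ is positive by Equation \ref{eq:cost}. Two points of comparison on the welfare step are worth noting. First, your welfare loss $\tilde{h}\left[\left(1-\pi\right)\left(1-\epsilon\right)-\pi\epsilon g-c\right]$ is the algebraically correct expression; the paper's printed loss has $+c$ rather than $-c$ (apparently a sign slip, immaterial for positivity since both versions are positive under Equation \ref{eq:cost}, but yours is the one that actually follows from subtracting shirking welfare from effort-regime welfare). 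Second, the concern you raise about the principal's replacement costs is real and is precisely what the paper's proof avoids by comparing the shirking regime to the \emph{first best} (where no replacement ever occurs) rather than to the effort \emph{equilibrium}: in the effort equilibrium the principal incurs expected replacement cost $\left(1-\pi\right)\epsilon\, r\left(\underline{\gamma}h\right)$, and if this is counted as deadweight, Equation \ref{eq:IRcommit} only bounds it by $h\left[\left(1-\pi\right)\left(1-\epsilon\right)-\pi\epsilon g\right]$, which is \emph{not} tight enough to preserve the welfare drop (at a binding threshold the equilibrium-to-equilibrium welfare change could be as low as $-\tilde{h}c$). So your instinct that this is ``exactly where Equation \ref{eq:IRcommit} must be invoked'' does not quite close the gap; the statement as the paper proves it relies on treating replacement costs as transfers (or on benchmarking against the first best), and your write-up should make that accounting choice explicit rather than leave it as an open obstacle.
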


		The model thus has a surprising prediction: a technological advance that is useful to too many people may actually trigger a downturn in production. This is the main result of the paper.
		
		\section{Applications and discussion\label{sec:Applications-and-discussion}}
		
		\subsection{Firms, workers, and aggregate downturns\label{subsec:The-financial-crisis}}
		We can now consider how the results apply to various contexts. While not modeled explicitly here, a financial technology innovation such as a pricing model may work fine in some macroeconomic states (say, low interest rates) but not in others. If it is useful to a large fraction of workers, the firm is unable to credibly threaten to fire workers who fail to produce because the cost of replacing them is too high. Knowing this, workers use the technology without researching it first, as the firm can't fire everyone. Production in this example will be high in states where the technology is good but will drop in states where the technology fails. The model can thus link the failure of models used to price securitized products to their popularity. 
		
		A similar concern is that a technology may be good for some use cases but bad for others. So a firm that has success using a large language model to generate code, which can quickly be tested and verified, may also start using for summarizing complex legal documents before it is proven capable of doing so reliably. While generative AI has yet to produce examples of large-scale failure similar to the 2008 financial crisis, its growing popularity coupled with difficulty in assessing its output \citep{mcintosh2024inadequacies} signals that coordinated shirking of pre-implementation evaluation may indeed be a risk in AI adoption.
		
		It is important to note that opportunities for coordinated shirking (i.e. $h_{t}>\tilde{h}$) may arise rarely. Firm management may think that their existing compensation and governance policies are sufficient to address any agency frictions simply because technology that is both widely useful (high enough $h_{t}$) but flawed for its current use has not come along in a while. Furthermore, consider a worker who somehow learns that a new, widely useful technology is flawed. Even with this knowledge, he will still choose to use the new technology if everyone else is, as it nets him a higher wage, and there is no chance of punishment. Research effort in this model has the flavor of a public good, since the technology is common. If it were possible for just one worker to exert the research effort and then share the signal with the others before technology choices were made, that would fix the problem. But once everyone is using a technology, it may be difficult to dissuade them, even if its flaws are made known. This may explain why even as doubts regarding the soundness of the credit boom arose, industry practices were slow to change.
		
		Of course, many financial services employees were indeed fired or laid off during the financial crisis. This paper's model predicts total amnesty for large enough groups of shirkers, but it is simplistic in that the firm size remains the same---if the firm size shrank during recessions, for example, there could be layoffs during a downturn. And since workers would not individually be able to affect the probability
		of a downturn, the core mechanism of the model would remain intact. A quant choosing to use industry-standard pricing formulae could reasonably expect this choice not to affect his future employability, even if the technology turned out to be unsound. Similarly, workers in various industries may see little personal downside to using generative AI in their work, if everyone is doing it.
		
		\subsection{Policy implications}
		
		Following the financial crisis of 2008, capping bankers' bonuses was widely discussed as a means of controlling risk-taking. In 2013, the European Union restricted banker bonuses to 100\% of pay, or up to 200\% with shareholder approval \citeyearpar[PwC LLP,][]{pricewaterhousecoopers_llp_eu_2013}. This shifts the balance of bankers' compensation from variable (bonus) to fixed (salary). Can such a policy reduce risk-taking?
		
		In this paper, the vital friction arises precisely because firms cannot punish workers by paying them less, only by firing them. Restricting variable compensation may thus exacerbate this friction, and increase the chance of a downturn. Variable compensation aligns workers' incentives with the returns of their projects. If in the model firms were allowed to pay workers their realized production rather than their potential production (similar to end-of-year bonuses), workers would be motivated to put forth research effort all the time, and the friction would disappear.
		
		Of course, timing is key. If the outcomes of workers' projects are not realized for years after they are paid, then end-of-year variable compensation would be no different from fixed compensation during the year. In this case, something stronger would be needed, such as clawbacks (in which the firm is able to reclaim bonuses if long-term performance objectives are not met) or payment in stock or stock options
		(which would lose value if the projects failed). These prescriptions are not new, but this paper illuminates another reason they may be important: the inadequacy of firing workers as a way to induce effort when workers are able to coordinate their failures.
		
		Finally, note that coordinated shirking relies on anonymity. Another potential solution is thus to use a publicly known order of punishment in the event of production failure. An analogous mechanism applies to speeding on the highway: if the traffic cop can only pull over one car but will pull over the first car in any group of speeders, then this will unravel commuters' incentive to speed in a large group. In the context of technology adoption at work, if firms are able to somehow identify a salient individual for any possible subset of workers and commit to punishing that individual, coordinated shirking can be avoided. For example, the firm could commit to firing the most senior worker (in terms of tenure at the firm) in a group whenever their production fails. Of course, this may not always be possible, or enough. Workers may know that a lot of other people are using a new technology but remain unsure exactly who. Or a group of workers may have all started at the same time, making seniority an insufficiently distinguishing characteristic. Nonetheless, to the extent that firms can create a complete, strict ordering of workers that is known by all, it may be able to prevent instances of coordinated shirking.\footnote{Technically, all that is needed is a map $f\left(\cdot\right)$ from the power set of agents to agents such that $f\left(a\right)\in a$; a complete strict order is simply an easy example.}
		
		\printbibliography
		
		\section*{Appendix: Proofs}
		\begin{proof}[Proof of Lemma \ref{lem:efficientsignal}]
			By ignoring the new technology, the agent is guaranteed production of 1. The expected production if the agent does not exert research effort (shirks) but uses the new technology anyway is \textcolor{black}{ $\pi\left(1+g\right).$} If the agent does exert research effort and receives a favorable signal ($\eta=1$), expected production using the new technology is $\left(1-\epsilon\right)\left(1+g\right).$ This is greater than the production of 1 guaranteed by forgoing the new technology if $g>\frac{\epsilon}{1-\epsilon}.$ If the agent exerts research effort and receives an unfavorable signal ($\eta_{i}=0$), expected production using the new technology is $\epsilon\left(1+g\right).$ Trusting the signal and forgoing the new technology will be preferable if $g<\frac{1-\epsilon}{\epsilon}.$ 
		\end{proof}
		\begin{proof}[Proof of Lemma \ref{lem:cost}]
			Expected production when exerting effort is 
			$\pi\epsilon+\left(1-\pi\right)\left(1-\epsilon\right)+\pi\left(1-\epsilon\right)\left(1+g\right)$, while expected production without effort is $\pi\left(1+g\right).$ Taking the difference and comparing it to the cost of effort yields the result.
		\end{proof}
		
		\begin{proof}[Proof of Lemma \ref{lem:convex}] Suppose the cost $r\left(\cdot\right)$ of replacing a given number of agents exhibits concavity somewhere: there exist $x<x^\prime<x^{\prime\prime}$ (with associated minimizing subsets of agents $\rho$, $\rho^\prime$, $\rho^{\prime\prime}$) such that $\frac{r\left(x^\prime\right)-r\left(x\right)}{x^\prime-x} > \frac{r\left(x^{\prime\prime}\right)-r\left(x^{\prime}\right)}{x^{\prime\prime}-x^\prime}$. In other words, the additional replacements as the number of agents replaced increases from $x$ to $x^\prime$ are more costly on average than the additional replacements going from $x^\prime$ to $x^{\prime\prime}$. Assume that $\rho\subset\rho^\prime\subset\rho^{\prime\prime}$---this is without loss, since if it is not true (because some agents cost the same to replace) it will be true for three sets of equally costly agents. Let $y:=\min \{x^{\prime\prime}-x^\prime , x^{\prime}-x\}$. Then, consider forming an alternative to $\rho^\prime$ by taking the least costly $y$ agents from $\rho^{\prime\prime}\setminus\rho^\prime$ and swapping them with the most costly $y$ agents from $\rho^\prime\setminus\rho$. Because the average cost of $\rho^{\prime\prime}\setminus\rho^\prime$ is strictly less than the average cost of $\rho^\prime\setminus\rho$, this yields a set of agents of measure $x^\prime$ that is less costly to replace than $r\left(x^\prime\right)$. This violates the definition of $r\left(\cdot\right)$ as the least cost of replacing a given measure of agents, and demonstrates that $r\left(\cdot\right)$ cannot exhibit any concavity.
		\end{proof}

		\begin{proof}[Proof of Lemma \ref{lem:punish}]
			The principal's value of employing policy $\gamma\left(\cdot\right)$ given a technology state $h$ for which the principal does punish is: $\left(1-h\right)+h\left[\pi\left(1-\epsilon\right)\left(1+g\right)+\pi\epsilon+\left(1-\pi\right)\left(1-\epsilon\right)\right]	-\left(1-\pi\right)\epsilon r\left(h\gamma\left(h\right)\right)-wh.
			$ The principal's value of employing policy $\gamma\left(\cdot\right)$ given a technology state $h$ for which the principal does not punish is: $\left(1-h\right)+h\left(\pi\left(1+g\right)-w\right).$	The difference is $
			h\left[-\pi\epsilon g+\left(1-\pi\right)\left(1-\epsilon\right)\right]-\left(1-\pi\right)\epsilon r\left(\gamma\left(h\right)h\right). $	For a policy $\gamma\left(\cdot\right)$ to be optimal, this must be greater than zero. Using the fact that in equilibrium the principal punishes at rate $\underbar{\ensuremath{\gamma}}$, we have the result.
		\end{proof}
		
		\begin{proof}[Proof of Proposition \ref{prop:threshold}]
			Since by Lemma \ref{lem:convex} $r\left(\cdot\right)$
			is convex, $r\left(\underbar{\ensuremath{\gamma}}h\right)$ is convex
			in $h$ given any $\underbar{\ensuremath{\gamma}}\in\left[0,1\right]$.
			Recalling that $r\left(0\right)=0$, this implies that given $\underbar{\ensuremath{\gamma}}$, if $h\in\left[0,1\right]$ satisfies Equation \ref{eq:IRcommit}, then so does any $h^{\prime}$ satisfying $0<h^{\prime}<h$. So letting $\tilde{h}$ be the supremum of the set of values in $\left[0,1\right]$ for which Equation \ref{eq:IRcommit} holds, the principal can and will optimally punish for all $h$ up to the threshold $\tilde{h}$. As discussed before, the principal will in all cases choose either $\gamma=0$ or $\gamma=\underbar{\ensuremath{\gamma}}$. So under commitment the principal's optimal policy is the threshold rule given by Equation \ref{eq:threshold}. The optimal policy at $\tilde{h}$ can be 0 for any parameters, or $\gamma^{\star}\left(\tilde{h}\right)=\underbar{\ensuremath{\gamma}}$ only if $\tilde{h}$ satisfies Equation \ref{eq:IRcommit} (in which case two threshold equilibria exist).
			
		\end{proof}
		
		\begin{proof}[Proof of Lemma \ref{lem:effort}]
			If the agent does exert effort, the expected payoff is
			\begin{equation}
				- c + \left(\pi\left(1-\epsilon\right)+\left(1-\pi\right)\epsilon\right)w + \left(1-\left(1-\pi\right)\epsilon\gamma\left(h\right)\right)v_c.
				\label{eq:veffort}
			\end{equation}
			
			The agent incurs cost $c$ for exerting effort. If the agent receives a signal that the technology is good (chance $\pi\left(1-\epsilon\right)+\left(1-\pi\right)\epsilon$) he will use it, and receive an additional $w$ for doing so. Only if the technology is bad and the signal incorrect is punishment triggered, resulting in a $\gamma\left(h\right)$ chance of the agent being replaced and losing the continuation value $v_c$.

			The agent's expected value of shirking on research but using the new technology anyway is
			\begin{equation}
				w + \left(1 - \left(1 - \pi\right)\gamma\left(h\right)\right)v_c.
				\label{eq:vshirk}
			\end{equation}

			The agent receives the increased wage $w$ for using the new technology, but if the new technology is bad (chance $1-\pi$) and he is selected for punishment (chance $\gamma\left(h\right)$), he misses out on the continuation value $v_c$. Taking the difference between expressions \ref{eq:veffort} and \ref{eq:vshirk} yields the result.
		\end{proof}
		\begin{proof}[Proof of Corollary \ref{cor:drop}]
			With the threat of punishment, expected output is $\left(1-h\right)+h\left[\left(1+\pi g\right)\left(1-\epsilon\right)+\pi\epsilon\right]$. Without it, expected output is $\left(1-h\right)+h\pi\left(1+g\right)$. The difference is $h\left[\left(1-\epsilon\right)\left(1-\pi\right)-\pi g\epsilon\right]$. By Equation \ref{eq:cost} this is greater than $c$, and since $c$ is assumed to be positive, so is this equilibrium drop in output as $h$ surpasses $\bar{h}$. Similarly, the expected welfare loss at $\tilde{h}$ (compared to the first best, Equation \ref{eq:first best}) is $h\left[\left(1-\epsilon\right)\left(1-\pi\right)-\pi g\epsilon+c\right]$. By Equations \ref{eq:growth} and \ref{eq:cost}, this is positive. 
		\end{proof}

	\end{document}